\newcommand{\ket}[1]{\ensuremath {|\: #1 \: \rangle}}
\newcommand{\bra}[1]{\ensuremath{\langle \: #1 \:|}}
\newcommand{\braket}[2]{\ensuremath{\langle \: #1 \: | \: #2 \: \rangle}}
\newcommand{\eref}[1]{(\ref{#1})}
\newcommand{\fref}[1]{figure \ref{#1}}
\newcommand{\sref}[1]{section \ref{#1}}
\newcommand{\llrr}[1]{\ensuremath{\left( #1\right)}}
\newcommand{\llrrq}[1]{\ensuremath{\left[ #1\right]}}
\newtheorem{theorem}{Theorem}
\newtheorem{fact}{Fact}
\newtheorem{corollary}{Corollary}
\begin{document}
\title{Improved scaling of Time-Evolving Block-Decimation algorithm through Reduced-Rank Randomized Singular Value Decomposition}
\author{D. Tamascelli$^{1,2}$, R. Rosenbach$^{2}$, and M.~B. Plenio$^{2}$}
\affiliation{$^{1}$ Dipartimento di Fisica, Universit{\`a} degli Studi di Milano, Via Celoria 16, 20133 Milano-Italy\\
$^{2}$ Institut f\"ur Theoretische Physik \& IQST, Albert-Einstein-Allee 11, Universit\"at Ulm, Germany}
%
\begin{abstract}

When the amount of entanglement in a quantum system is limited, the relevant
dynamics of the system is restricted to a very small part of the state space. When
restricted to this subspace the description of the system becomes efficient in the
system size. A class of algorithms, exemplified by the Time-Evolving Block-Decimation
(TEBD) algorithm, make use of this observation by selecting the relevant subspace
through a decimation technique relying on the Singular Value Decomposition (SVD).
In these algorithms, the complexity of each time-evolution step is dominated by the SVD.
Here we show that, by applying a randomized version of the SVD routine (RRSVD), the
power law governing the computational complexity of TEBD is lowered by one degree,
resulting in a considerable speed-up. We exemplify the potential gains in efficiency
at the hand of some real world examples to which TEBD can be successfully applied to
and demonstrate that for those system RRSVD delivers results as accurate as state-of-the-art
deterministic SVD routines.
\end{abstract}
\maketitle
%
\section{Introduction}
\label{Introduction}

The description of the physics of quantum many-body systems suffers from the "curse of dimensionality",
that is, the size of the parameter set that is required to achieve an exact description of the physical
state of a quantum many-body system grows exponentially in the number of its subsystems. Therefore, the
simulation of quantum-many-body systems by classical means appears to require an exponential amount of
computational resources which, in turn, would impose severe limitations on the size of the quantum
many-body systems that are amenable to classical simulation.

On the other hand, exactness of description may be traded for an approximate representations of the
state and dynamics of a quantum many-body system for as long as the quality of the approximation can
be controlled and increased at will and whenever such an approximate treatment results in a polynomial
scaling of the resources with the system size. Clearly, this will not be possible for all system dynamics
as it would imply the classical simulability of quantum computers which is generally not believed to
be the case.

One specific setting of considerable practical importance that allows for efficient approximate
description of quantum many-body systems concerns systems whose entanglement content is limited.
Indeed, in pure states that factor, at all times, into a product of
states; each involving a number of qubits that is bounded from above for all times by a constant
\cite{jozsa03} can be simulated efficiently on a classical device. Going beyond this, it is well-known
that the states of 1-D quantum systems often obey an area law \cite{Audenaert2002,Plenio2005,eisert08,horodecki13,horodecki15}
which represents a severe limitation of their entanglement content. This can be made use of, as
the state of such slightly entangled 1-D quantum many-body system in a pure state can be described
efficiently by means of the Density Matrix Renormalization Group (DMRG) \cite{white92}. It was
noticed early on \cite{ref:rommer1997} that DMRG amounts to the approximation of the state of the
system by a matrix product state \cite{ref:perezgarcia2007} which can be constructed systematically
in terms of a consecutive Schmidt decomposition.

This approach can be extended to the dynamics of one-dimensional quantum many-body systems. The
time-dependent DMRG (t-DMRG) \cite{white04}, the time-dependent Matrix Product States (t-MPS)
\cite{garcia06}, and the Time-Evolving Block-Decimation (TEBD) \cite{ref:vidal2003,ref:vidal2004}
are all algorithms based on the idea of evolving an MPS \cite{ref:perezgarcia2007} in time. In
settings in which only a restricted amount of entanglement is present in the system, these methods
are very efficient and the cost of simulation scales merely polynomially in the system size. A very
clear presentation of these three algorithms as well as a discussion of the main differences
between them can be found in \cite{ref:schollwoeck2011}.

All these algorithms rely in an essential way on the singular value decomposition (SVD) which, due
to its computational complexity, represents a bottleneck in their implementations. In this work we
show how to address this issue within the TEBD framework by replacing the SVD which is used to restrict
the Hilbert space of quantum states relevant to the system dynamics, with the Reduced-Rank Randomized
SVD (RRSVD) proposed in \cite{halko11}. We show that when the Schmidt coefficients decay exponentially,
the polynomial complexity of the TEBD decimation step can be reduced from $\mathcal{O} (n^3)$ to
$\mathcal{O}(n^2)$ with $n$ indicating the linear dimension of the square matrix to decompose. This 
results in a considerable speed-up for the TEBD algorithm for real world problems, enabling the access 
of larger parameter spaces, faster simulation times, and thus opening up new regimes that were previously 
inaccessible.

The paper is organized as follows: Section \ref{sec:tebd} provides a brief description of the TEBD algorithm for
pure and mixed state dynamics in order to make the manuscript self-contained for the non-specialist
and to identify and highlight the crucial step in which the RRSVD routine can be applied for considerable
benefit. Section \ref{sec:tedopa} introduces a specific and challenging application of TEBD, the TEDOPA scheme, which
maps an open quantum system on a one-dimensional configuration, allowing for the efficient simulation of
its dynamics. Section \ref{sec:RRSVD} the proceeds with a description of the salient features of the RRSVD
algorithm and a discussion of the speed-up that it provides over the standard SVD. Benchmarking in the
TEBD context with applications to TEDOPA along with stability analysis are presented in section \ref{sec:rrsvdtedopa}.
The last section is devoted to conclusions and outlook.

\section{An algorithm for one-dimensional quantum systems}
\label{sec:tebd}
The time evolving block decimation (TEBD) is an algorithm that generates
efficiently an approximation to the time evolution of a one-dimensional
system subject to a nearest-neighbor Hamiltonian. Under the condition
that the amount of entanglement in the system is bounded a high fidelity
approximation requires polynomially scaling computational resources. TEBD
does so by dynamically restricting the exponentially large Hilbert space
to its most relevant subspace whose size is scaling polynomially in the
system size, thus rendering the computation feasible \cite{ref:schollwoeck2011, ref:vidal2004}.

TEBD is essentially a combination of an MPS description for a one-dimensional
quantum system and an algorithm that applies two-site gates that are necessary
to implement a Suzuki-Trotter time evolution. Together with MPS operations such
as the application of measurements this yields a powerful simulation framework
\cite{ref:perezgarcia2007}.

While originally formulated for pure states, an extension to mixed states is
possible by introducing a matrix product operator (MPO) to describe the density
matrix, in complete analogy to an MPS describing a state \cite{ref:zwolak2004}.
The simulation procedure remains unchanged, except for details such as a squaring
of the local dimension on each site, the manner in which two-site gates are
built, the procedures to achieve normalisation as well as the implementation
of measurements \cite{ref:zwolak2004, ref:schollwoeck2011}. While standard
MPO formulations cannot ensure positivity of the state efficiently, recent
reformulations can account for this feature too \cite{Montangero14}.
Important for the present work are implications of these modifications on the
two-site update - and while the numerical recipe does not change, its scaling {\em does}.

\subsection{Introduction to MPS}

The remarkable usefulness and broad range of applications of the MPS
description for quantum states has been widely recognized early on in
the development of DMRG algorithms \cite{ref:rommer1997}. To better 
understand the full extent of the presented work, we highlight the 
relevant key features of MPS, referring to references~\cite{ref:perezgarcia2007,ref:schollwoeck2011}
for a full account.

Let us introduce an MPS for a pure state of $N$ sites. For simplicity
we assume that each site has the same number of dimensions~$d$, the
extension to varying dimension is straight forward. The MPS then
relates the expansion coefficients $c_{i_1 i_2 \ldots i_N}$ in the Fock
basis to a set of $N\cdot d$~matrices $\Gamma$ and $N-1$ matrices
$\lambda$
\begin{align}
	\ket{\psi} &=	\sum_{i_1, i_2, \ldots i_N}
			c_{i_1 i_2 \ldots i_N}
			\ket{i_1 i_2 \cdots i_N}
			\label{eq:defstate} \\
		    &=	\sum_{i_1, i_2, \ldots i_N}
		   	\Gamma^{\llrrq{1}i_1} \cdot \lambda^{\llrrq{1}}
			\cdot \Gamma^{\llrrq{2}i_2} \cdot \ldots \cdot \nonumber \\
			& \hspace{2.cm}\ldots \lambda^{\llrrq{N-1}} \Gamma^{\llrrq{N}i_N}
			\ket{i_1 i_2 \cdots i_N}.
			\label{eq:defmps}
\end{align}

Each of the $N$ sites is assigned a set of $d$ matrices~$\Gamma$ which
have dimension $\chi_l\times\chi_r$. The index~$k$ in square brackets
denotes the corresponding site and the $i_k$ the corresponding physical
state. The diagonal $\chi_b\times\chi_b$ matrices $\lambda$ are assigned to
the bond~$k$ between sites~$k$ and $k+1$. The structure of the MPS is
such that the matrices $\lambda$ contain the Schmidt values for a
bipartition at this bond.  The matrices~$\Gamma$ and $\lambda$ are
related to the coefficients~$c$ by
\begin{equation}
	c_{i_1 i_2 \ldots i_N} = \Gamma^{\llrrq{1}i_1} \cdot
	\lambda^{\llrrq{1}} \cdot \Gamma^{\llrrq{2}i_2} \cdot
	\ldots \cdot \lambda^{\llrrq{N-1}} \cdot
	\Gamma^{\llrrq{N}i_N}
	\label{eq:relcoefmat}
\end{equation}
with matrix dimensions $\chi_l$, $\chi_r$, and $\chi_b$ for all $\Gamma$
and $\lambda$ such that building the product yields a number. The main
reason for employing an MPS description is the reduction from $d^N$
coefficients~$c$ to only $\mathcal{O}\llrr{d N \chi^2}$ when the matrices
$\Gamma$ and $\lambda$ are at most of size $\chi\times\chi$. This description
is efficient, provided that the matrix size $\chi$ (also known as the
{\em bond dimension}) is restricted - which it is, if the amount of
entanglement in the system is bounded. Further the MPS structure entails that
local gates applied to the whole state only change the matrices of the
sites they act on - thus updates are inexpensive, as opposed to a full
state vector description.

\subsection{Two-site gates}

The crucial ingredient in this simulation scheme is the SVD. It is the
solution to the question of how to apply a two-site gate to an MPS or MPO.

The fact that a gate $G$ acting on the two sites $k$ and $k+1$ only
changes the matrices local to these sites can easily be seen from
\begin{align}
	G_{k,k+1} \ket{\psi}
	= &\sum_{i_1 \ldots i_N}
	   \Gamma^{\llrrq{1}i_1} \cdot \lambda^{\llrrq{1}} \cdot \ldots
	   \cdot \Gamma^{\llrrq{N}i_N} \\
	  & \ket{i_1 \ldots i_{k-1}}  G_{k,k+1} \ket{i_k i_{k+1}} \ket{i_{k+2} \cdots i_N}  \nonumber \\
	=& \sum_{i_1 \ldots i_N} \sum_{j_k, j_{k+1}}
	   \Gamma^{\llrrq{1}i_1} \cdot \lambda^{\llrrq{1}} \cdot \ldots \cdot \Gamma^{\llrrq{N}i_N}\nonumber \\
	  &  \bra{i_k i_{k+1}}
	   G_{k,k+1} \ket{j_k j_{k+1}} \ket{i_1 \ldots i_N}.
	\label{eq:deftsgate}
\end{align}
Above we first introduced the completeness relation for $j_k$ and
$j_{k+1}$, followed by switching indices $i_k$ with $j_k$ and $i_{k+1}$
with $j_{k+1}$. Identifying all terms related to $j_k$ and $j_{k+1}$
now defines a tensor $\Theta$ of fourth order
\begin{equation}
	\Theta\llrr{i_k, i_{k+1}, \alpha, \beta} =
	\sum_{\gamma} \lambda_{\alpha}^{\llrrq{k-1}}
	\Gamma_{\alpha,\gamma}^{\llrrq{k}i_k} \cdot
	\lambda_{\gamma}^{\llrrq{k}} \cdot
	\Gamma_{\gamma,\beta}^{\llrrq{k+1}i_{k+1}} \cdot
	\lambda_{\beta}^{\llrrq{k+1}}
	\label{eq:deftheta}
\end{equation}
which can be built at a numerical cost of $\mathcal{O}\llrr{d_k \cdot
d_{k+1} \cdot \chi^3}$ basic operations. This tensor needs to be
updated when applying the gate~$G$. The update rule from
Eq.~\eqref{eq:deftsgate} yields the relation
\begin{align}
	\tilde{\Theta}\llrr{i_k, i_{k+1}, \alpha, \beta} =&
	\sum_{j_k, j_{k+1}}
	\bra{i_k i_{k+1}} G_{k,k+1} \ket{j_k j_{k+1}} \cdot \nonumber \\
	&\cdot \Theta\llrr{j_k, j_{k+1}, \alpha, \beta}.
	\label{eq:defupdatetheta}
\end{align}

This sum is performed for all $\alpha$ and $\beta$ - which in general
run from $1$ to $\chi$. Thus there are $d_k\times d_{k+1}$ products,
which gives the number of basic operations for the update of $\Theta$
as $\mathcal{O}\llrr{d_l^2 \cdot d_r^2 \cdot \chi^2}$
\cite{ref:vidal2003}. This formula however only enables the update of
the complete matrix $\Theta$ and not of the single entities
$\Gamma^{\llrrq{k}}$, $\lambda^{\llrrq{k}}$ and $\Gamma^{\llrrq{k+1}}$.
They still have to be extracted from these updated products. To do
this, $\Theta$ is written in a blocked index form
$\Theta_{\llrr{d_k\chi},\llrr{d_{k+1}\chi}}$ which then is singular
value-decomposed \cite{ref:vidal2003}. The general singular value
decomposition (SVD) scales as $\mathcal{O}\llrr{m\cdot n^2}$ for a
$m\times n$-matrix, thus resulting in an overall computational cost of
$\mathcal{O}\llrr{d_l \cdot d_r^2 \cdot \chi^3}$. This makes the SVD
the real bottleneck in the simulation, consuming by far the most resources,
and therefore the first target for improvements.

Here we stick to the (unofficial) standard notation for MPS parameters
in the context of the TEBD algorithm and denote the diagonal matrices
as well as their entries by $\lambda$. During our discussion of the
singular value decomposition though we switch to the respective
notational standards where the $i$'th singular value will be denoted by
$\sigma_i$ - which however is the same quantity as the $\lambda_i$ in
this section.

\subsection{Error analysis}

During a TEBD simulation, the two main error sources are the Trotter
and the truncation error. Other small error sources, often depending on
implementational or algorithmical choices, are neglected in the
following analysis.

TEBD relies heavily on the nearest-neighbor structure of the underlying
Hamiltonian to implement time evolution. Traditionally this is done by
standard Suzuki-Trotter decompositions \cite{ref:suzuki1990} where the
total Hamiltonian is split into two terms $H=F+G$ with each $F$ and $G$
are the sum over all even and odd (respectively) terms of the
Hamiltonian. This way all terms within $F$ ($G$) commute with each
other, incurring no error while applying an operator of the form
$\text{exp}\llrr{\alpha F}$ ($\alpha \in \mathbb{C}$). The most
straight forward and illustrative example is the standard $3$rd-order
expansion

\begin{align}\label{eq:defst3rd}
	&\text{exp}\llrr{i H \delta t} =
	\text{exp}\llrr{i \llrr{F+G} \delta t} =  \\
	&=\text{exp}\llrr{i \frac{1}{2} F \delta t} \cdot
	\text{exp}\llrr{i G \delta t} \cdot
	\text{exp}\llrr{i \frac{1}{2} F \delta t} +
	\mathcal{O}{\llrr{\delta t}^3}. \nonumber
\end{align}

This leads to three sweeps ($\frac{1}{2}F$, $G$, $\frac{1}{2}F$) of
non-overlapping gate applications, comprising one time step.  Various
higher-order schemes for such decompositions exists, differing in the
number of sweeps and the order of the resulting error \cite{ref:suzuki2005}.
However, these kind of schemes may introduce non-orthogonal components
since the order of gate applications is not successive. This can be
circumvented by resorting to schemes that produce sweeps with
ordered, successive gate applications \cite{ref:sornborger1999}.

A decomposition to order $p$ introduces an error of order
$\epsilon_{\delta t}=\llrr{\delta t}^{p+1}$. The error incurred in one
time step in general scales linearly with the system size~$N$.  This is
due to the nested commutators occurring in the error term of the
Suzuki-Trotter decomposition as can be seen when applying the
Baker-Campbell-Hausdorff formula. Since the number of time steps taken
is the total time $T$ divided by the number of time steps $T/\delta t$,
the total Trotter error $\epsilon_{trotter}$ is of order
$\mathcal{O}\llrr{\llrr{\delta t}^{p}NT}$ \cite{ref:gobert2005}.

The second considered error source is the truncation error. It stems
from the truncation of the Schmidt values during the application of a
two-site gate. Employing the Schmidt decomposition, a bipartite state
can be written as
\begin{align}
	\ket{\psi} &= \sum_{i=1}^{\chi '} \lambda_i
	\ket{\psi_i^{\text{left}}} \ket{\psi_i^{\text{right}}} +
	\sum_{i=\chi '+1}^{\chi} \lambda_i \ket{\psi_i^{\text{left}}} \ket{\psi_i^{\text{right}}} \nonumber  \\
	&= \ket{\psi_{\text{trunc}}} + \ket{\psi_{\bot}}
	\label{eq:bipartite}
\end{align}
where the left sum until $\chi '$ denotes the kept part
$\ket{\psi_{\text{trunc}}}$ and the right sum starting from $\chi '+1$
denotes the discarded part $\ket{\psi_{\bot}}$. Due to the fact
that the $\ket{\psi_i^{\text{left}}}$ are mutually orthogonal (as are
those of the right subsystem), the discarded part is orthogonal to the
retained. Given that the squared Schmidt values sum up to $1$, this
truncation leads to a deviation in the norm of the state
\begin{equation}
	\braket{\psi_{\text{trunc}}}{\psi_{\text{trunc}}} =
	1 - \sum_{i=\chi '+1}^{\chi} \lambda_i^2 = 1 - w
	\label{eq:defdiscweigt}
\end{equation}
where we defined the {\em discarded weight} $w=\sum_{i=\chi '+1}^{\chi}
\lambda_i^2$. Thus when renormalizing $\ket{{\psi_{\text{trunc}}}}$ we
pick up a factor of $1/\llrr{1-w}$. Thus upon $n$ truncations we are
off by a factor of about $\llrr{1-w}^{n_t}$ with $n_t$ being the number
of truncations performed. Truncating each bond in each time step
results in $n_t \propto \frac{NT}{\delta t}$ and thus the truncation
error is about
\begin{equation}
	\epsilon_{\text{trunc}} = \llrr{1-w}^{\frac{NT}{\delta t}} =
	\text{exp}\llrr{\frac{NT}{\delta t}\text{ln}\llrr{1-w}}
	\label{eq:truncerr}
\end{equation}

Thus we end up with a careful balancing of the two errors, depending on
the size of the time step $\delta t$. For smaller $\delta t$ we have a
smaller truncation error. Yet this requires more truncations due to the
larger number of time steps taken and thus in a larger truncation
error.

\section{An advanced application of the TEBD algorithm}
\label{sec:tedopa}

The TEBD algorithm is remarkably useful also in scenarios which at
first seem to be quite different from quantum many-body systems. One
such example is its usage in the time evolving density matrix using
orthogonal polynomials algorithm (TEDOPA) capable of treating open quantum
systems. We briefly present the TEDOPA scheme to show in which regimes
RRSVD proves to be most useful and how to speed-up previous simulations
and refer to \cite{ref:prior2010, ref:chin2010, ref:woods2014} for a
more detailed presentation of the algorithm.

TEDOPA is a certifiable and numerically exact method to treat open
quantum system dynamics \cite{ref:prior2010, ref:woods2014,
ref:woods2015}. It acts upon a spin-boson model description of an open
quantum system where a central spin interacts linearly with an
environment modelled by harmonic oscillators. In a two-stage process
TEDOPA then first employs a unitary transformation reshaping the
spin-boson model into a one-dimensional configuration. In a second step
this emerging configuration is treated by TEBD, exploiting its full
simulation power for one-dimensional systems.

The total Hamiltonian is split into system, environment and interaction
part
\begin{align}
	&H = H_{\text{sys}} + H_{\text{env}} + H_{\text{int}},
	\label{eq:H_sb1}\\
	&H_{\text{env}} = \int_0^{x_{\text{max}}}  \!\!\!\!dx~g\llrr{x}
			  a_x^\dagger a_x,
	\label{eq:H_sb2}\\
	& H_{\text{int}} = \int_0^{x_{\text{max}}}  \!\!\!\!dx~h\llrr{x}
			  \llrr{a_x^\dagger + a_x} A .
	\label{eq:H_sb3}
\end{align}

The bosonic creation and annihilation operators $a_x^\dagger$ and $a_x$
fulfill the usual bosonic commutation relations for the environmental
mode~$x$. The function $g\llrr{x}$ can be identified with the
environmental dispersion relation; the function $h\llrr{x}$ gives the
system-environment coupling strength for mode $x$ between its
displacement $\llrr{a_x^\dagger + a_x}$ and the operator $A$ acting on the
system.

Here the functions $g\llrr{x}$ and $h\llrr{x}$, together with the
temperature, uniquely characterize an environment and define the
spectral density $J\llrr{\omega}$ given by
\begin{equation}
	J\llrr{\omega} = \pi h^2\llrrq{g^{-1}\llrr{\omega}}
	\frac{dg^{-1}\llrr{\omega}}{d\omega}.
	\label{eq:def-sd}
\end{equation}

The interpretation of the quantity
$\llrr{dg^{-1}\llrr{\omega}/d\omega}\delta\omega$ is the number of
quantised modes with frequencies between $\omega$ and $\omega +
\delta\omega$ (for $\delta\omega\rightarrow0$). Further hold $g$ the
relations $g^{-1}\llrrq{g\llrr{x}} = g\llrrq{g^{-1}\llrr{x}} = x$.

Then new oscillators with creation and annihilation operators
$b_n^\dagger$ and $b_n$ can be obtained by defining the analytical
transformation~$U_n\llrr{x}$ as
\begin{align}
	&U_n\llrr{x} = h\llrr{x} p_n\llrr{x}, \\
	&b_n^\dagger =
	\int_0^{x_{\text{max}}} \!\!\!\!dx U_n\llrr{x} a_x^\dagger.
	\label{eq:chainmap}
\end{align}

It utilizes the orthogonal polynomials $p_n\llrr{x}$ defined with
respect to the measure $d\mu\llrr{x}=h^2\llrr{x}dx$. While in certain
cases it is possible to perform this transformation analytically, in
general a numerically stable procedure is used
\cite{ref:chin2010, ref:prior2010, ref:gautschi1994}.

This transformation yields a semi-infinite one-dimensional
nearest-neighbor Hamiltonian
\begin{align}
	H =&
	H_{\text{sys}} +
	t_0 A \llrr{b_0 + b_0^\dagger} +
	\sum_{n=1}^\infty \omega_n b_n^\dagger b_n  \nonumber \\
	& \hspace*{1.5cm} +\sum_{n=1}^\infty t_n
	\llrr{b_n^\dagger b_{n+1} + b_n b_{n+1}^\dagger}
	\label{eq:H_1D}
\end{align}

whose nearest-neighbor geometry (which is necessary for the
application of TEBD) as well as coefficients $\omega_n$ and
$t_n$ are directly related to the recurrence coefficients of the
three-term recurrence relation defined by the orthogonal polynomials
$p_n\llrr{x}$ with respect to the measure $d\mu\llrr{x}=h^2\llrr{x}dx$
\cite{ref:chin2010}.

\begin{figure}[hbt]
	\begin{center}
	\includegraphics[width=0.9\columnwidth]{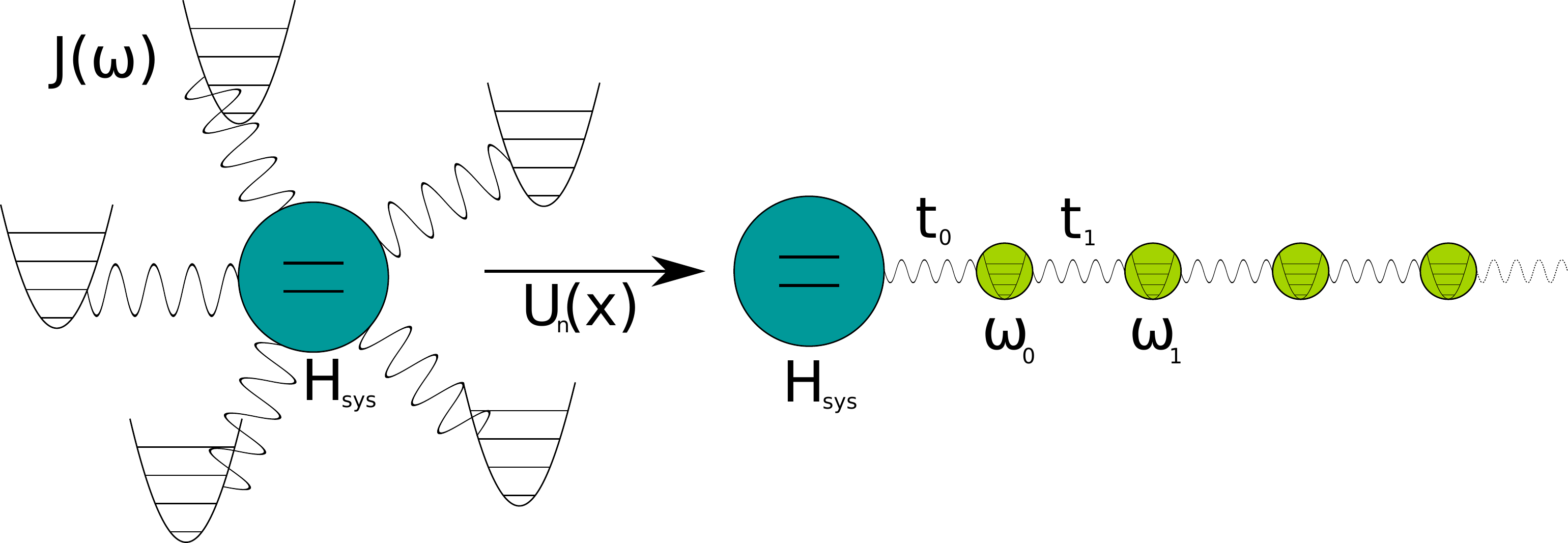}
	\caption{Illustration of the spin-bonson model's transformation
	into a one-dimensional configuration where the system is only
	coupled to the environment's first site.}
	\label{fig:tedopa}
	\end{center}
\end{figure}

This transformation of the configuration is depicted in Fig.~\ref{fig:tedopa},
from the spin-boson model on the left to a one-dimensional geometry on the right.
In a last step it is necessary to adjust this configuration further to suit
numerical needs. The number of levels for the environment's oscillator on
site~$k$ can be restricted to $d_{k\text{,max}}$ to reduce required computational
resources. A suitable value for $d_{k\text{,max}}$ is related to this
site's average occupation, depending on the environment's structure and
temperature. The number of sites that is required to faithfully represent
the environment has to be sufficiently large to completely give the appearance of
a ``large'' reservoir - one that avoids unphysical back-action on the system
due to finite-size effects that lead to reflections at the system boundaries
(see \cite{Rosenbach15} for extensions of TEDOPA that can alleviate this
problem considerably). It should be noted that these truncations, while feasible
numerically and easily justifiable in a hand-waving kind of argument, in the end
can also be {\em rigorously certified} by analytical bounds \cite{ref:woods2015}.
These adjustments yield an emerging system-environment configuration which is
now conveniently accessible by TEBD.

%



%
\section{Reduced-rank Randomized SVD} \label{sec:RRSVD}
The Singular Value Decomposition (SVD) is at the heart of  the MPS representation and MPS-based algorithms,
such as TEBD. The efficiency of TEBD comes from the possibility of approximating states living in an
exponentially large Hilbert space with states defined by a number of parameters that grows only polynomially
with the system size. In order to understand why the SVD plays such a crucial role, we introduce the following
problem: given a complex $m \times n$ matrix $A$, provide the best rank-$k$ ($k\leq n$) approximation of $A$.
Without loss of generality we suppose  $m \geq n$ and  $rank(A)=n$. The solution to this problem is well known
\cite{golub96}: first compute the Singular Value Decomposition of $A = U\Sigma V^\dagger$, where
$U=\llrr{U^{(1)},U^{(2)},\ldots,U^{(n)}}$, $V=\llrr{V^{(1)},V^{(2)},\ldots,V^{(n)}}$ are the left and right
singular vectors of $A$ respectively and $\Sigma = diag\llrr{\sigma_1,\sigma_2,\ldots,\sigma_n}$ with
$\sigma_1 \geq \sigma_2 \geq  \ldots \geq \sigma_n$. Then retain the first $k$ largest singular values
$\llrr{\sigma_1,\sigma_2,\ldots,\sigma_k}$ and  build the matrices $U_k = \llrrq{U^{(1)},U^{(2)},\ldots,U^{(k)}}$
and $V_k = \llrrq{V^{(1)},V^{(2)},\ldots,V^{(k)}}$. The matrix $\widetilde{A}_k = U_k \Sigma_k V_k^\dagger$ satisfies
\begin{align}
    ||A-\widetilde{A}_k||_F = \sqrt{\sum_{i=k+1}^{n}  \sigma_i^2} = \min_{rank(A') = k}||A-A'||_F  ,
\end{align}
where $|| \cdot ||_F$ indicates the Frobenius norm
\begin{align}\label{eq:frobNorm}
    ||A||_F = \sqrt{\sum_{i=1}^{n}  \sigma_i^2}.
\end{align}
In other words, $\widetilde{A}_k$ provides the best rank-$k$ approximation of $A$. This result justifies the
application of SVD for the TEBD decimation step.

The computational complexity of the SVD of $A$ is $\mathcal{O}(m \cdot n^2)$.
For large matrices, the SVD can therefore require a significant amount of time. This is a crucial point
since every single TEBD simulation step of an $N$-sites spin chain requires $\mathcal{O}\llrr{N}$ SVDs
which usually consumes about 90\% of the total simulation time.

As discussed in \sref{sec:tebd}, the bond-dimension $\chi$ requires discarding of $n-\chi$ singular values
(and corresponding left-/right- singular vectors). In the TEBD two-site update step, for example, we keep
only $\chi$ singular values out of  $n=d \cdot \chi$ (pure states) or $n=d^2 \cdot \chi$ (mixed states).
Most of the singular values and left-/right-singular vectors are therefore discarded. It means that we are
investing time and computational resources to compute information that is then wasted.

It is possible to avoid the \emph{full} SVD of $A$ and compute only the first $k$ singular values
and corresponding singular vectors by using \emph{Truncated SVD} methods; such methods are standard tools
in data-classification algorithms \cite{hastie09}, signal-processing \cite{candes09} and other research fields.
The  Implicitly Restarted Arnoldi Method \cite{sorensen98,sorensen02} and the Lanczos-Iteration \cite{larsen98}
algorithms, both belonging to the Krylov-subspace iterative methods \cite{saas92,hochbruck97}, are two examples.

The Reduced-Rank Singular Value Decomposition (RRSVD), originally presented in by N. Halko \emph{et al.}
\cite{halko11}, is a \emph{randomized} truncated SVD. It is particularly well suited to decompose structured
matrices, such as those appearing in the TEBD simulation of non-critical quantum systems. Most interestingly,
the algorithm is insensitive to the quality of the random number generator used, delivers highly accurate results
and is, despite its random nature, very stable: the probability of failure can be made arbitrarily small with
only minimal impact on the necessary computational resources.\\ In what follows we will describe the RRSVD
algorithm and report the main results on stability and accuracy. For a full account on RRSVD we refer the reader
to \cite{halko11}.

The RRSVD algorithm is a two-step procedure. The first step constructs an orthogonal matrix $Q$ whose columns 
constitute a basis for  the approximated range of the input matrix $A$. In the second step, the approximated 
SVD of $A$ is computed by performing a singular value decomposition of $Q^\dagger A$.

The approximation of the range can be done either for  fixed  \emph{error tolerance} $\epsilon$, namely by finding a  $Q_\epsilon$ such that
\begin{align} \label{eq:fixerrprob}
    ||A-Q_\epsilon Q_\epsilon^\dagger A|| \leq \epsilon,
\end{align}
or for a fixed rank $k$ of the approximation, that is by finding $Q_k$ such that
\begin{align}
    \min_{rank(X)\leq k} || A- X|| \approx ||A-Q_kQ_k^\dagger A||.
\end{align}
The first one is known as the \emph{fixed-precision approximation problem}, whereas the second is known as the \emph{fixed-rank approximation problem}.
Here and in what follows, we indicate by $||A||$  the operator norm, corresponding to the largest singular value of $A$.

If $\sigma_j$ is the $j$-th largest singular value of $A$ then
    \begin{align}
    & \min_{rank(X)\leq k(\epsilon)} ||A-X||_F  = \nonumber \\ & = ||A - Q_{k(\epsilon)}Q_{k(\epsilon)}^\dagger A||_F =
    \sqrt{\sum_{j=k(\epsilon)+1}^n \sigma_j^2}.
\end{align}
where $\sigma_{k(\epsilon)}$ is the first singular value $\geq \epsilon$ and the columns of the rank-$k(\epsilon)$ matrix $Q$ are the first $k(\epsilon)$ left singular vectors of $A$.
However,  this would require the knowledge of the first $k(\epsilon)$ singular values and vectors of $A$.

For the sake of simplicity, let us focus initially on the fixed-rank approximation problem.
In order to determine $Q_k$, we resort to randomness.
More precisely, we use a sample of $k+p$ random vectors $\omega^{(i)}$, whose  components are independently drawn form a standard Gaussian distribution $\mathcal{N}_{(\mu=0,\sigma=1)}$.
The set  $\left \{ \omega^{(i)} \right \}_{i=1}^{k+p}$ will be with \emph{very high probability} a set of independent vectors.
The parameter $p$ determines the amount of \emph{oversampling} needed to make the rank-$k$ approximation of $A$ more precise.
The $m \times (k+p)$ matrix
\[
    Y=A\Omega,
\]
where $\Omega = (\omega^{(1)},\omega^{(2)},\ldots,\omega^{(k+p)})$, will therefore have full rank $(k+p)$.
By re-orthogonalizing the columns of the matrix $Y$ we  obtain a basis for the rank-$(k+p)$ approximation of the range of $A$.
The re-orthogonalization can be  done by using the $QR$-decomposition $Y=Q R$.
If $(k+p) < n$, the computational cost of the first step of this algorithm is dominated by the matrix multiplication $A\Omega$: this operation has an asymptotic complexity  $\mathcal{O}(mn(k+p))$, whereas the QR decomposition of the $m \times (k+p)$ matrix $Y$ has asymptotic complexity $\mathcal{O}(m(k+p)^2)$ .

When the input matrix $A$ is very large, the singular vectors associated with small singular values may interfere with the calculation.
In order to reduce their weight relative to the dominant singular values it is expedient to take powers of the original matrix $A$.
So, instead of computing $Y=A\Omega$ we compute
\begin{align} \label{eq:Z}
    Z =B\Omega =  (AA^\dagger)^qA\Omega.
\end{align}
The singular vectors of $B =(AA^\dagger)^q A  $ are the same as the singular vectors of $A$; for the singular values of $B$, on the other hand, it holds:
\begin{align} \label{eq:powSV}
    \sigma_j(B) = \llrr{\sigma_j(A)}^{2q+1},
\end{align}
which leads to the desired reduction of the influence on the computation of the singular vectors associated to small singular values.
This ``stabilizing'' step, also referred to as \emph{Power Iteration} step (PI), increases the computational cost of the first step of the RRSVD algorithm by a constant factor $(2q+1)$.

A side effect of the PI is the extinction of all information pertaining to singular vectors associated to small singular values due to the finite size of floating point number representation.
In order to avoid such losses, we use intermediate re-orthogonalizations (see Algorithm \ref{alg:RSI}).
We point out that in the typical context where TEBD is successfully applied, the singular values of the according matrices decay very fast, so the PI scheme must be applied.

Since the RRSVD is a randomized algorithm, the quality of the approximation comes in the form of expectation values, standard deviations and failure probabilities.
We report pertinent results that can be found (including proofs) in \cite{halko11}, as well as some particular results tuned to cases of specific relevance to applications for TEBD.
\begin{theorem} \label{th:errexpiter}
(Corollary 10.10 in \cite{halko11})
Given a target rank $k \geq 2$ and an oversampling parameter $p\geq2 $, let $Q_Z$ be the orthogonal $m \times (k+p)$ matrix consisting of the fist $k+p$ left singular vectors of the matrix $Z$ defined in \eref{eq:Z}.
We define $P_Z = Q_Z^ \dagger Q_Z$. Then
\begin{align}\label{eq:errexpiter}
    &\mathbb{E}\llrr{||\llrr{\mathbb{I} - P_Z }A}||  \leq   \\
    & \leq \llrrq{\llrr{1+\sqrt{\frac{k}{p-1}}} \sigma_{k+1}^{2q+1} +\frac{e \sqrt{k+p}}{p} \llrr{\sum_{j>k} \sigma_j^{2\llrr{2q+1}}}^{\!\!\frac{1}{2}}}^{\frac{1}{2q+1}} \nonumber \\
    & \leq \llrrq{ 1+\sqrt{\frac{k}{p-1}}+\frac{e \sqrt{k+p}}{p} \sqrt{n-k}  }^{1/\llrr{2q+1}} \sigma_{k+1}. \nonumber
\end{align}
\end{theorem}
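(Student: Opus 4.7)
The plan is to reduce the problem to the non-power-iteration case via the auxiliary matrix $B = (AA^\dagger)^q A$, which by \eqref{eq:powSV} has the same singular vectors as $A$ but singular values $\sigma_j(A)^{2q+1}$. The key observation is that $Z = B\Omega$ is exactly what the plain randomized range finder would produce when applied to $B$, so one can invoke the basic bound for the plain algorithm (Theorem 10.6 in \cite{halko11}) directly on $B$ and then translate the resulting inequality back to one about $A$.

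First, I would record the plain expectation bound applied to $B$,
\begin{equation*}
\mathbb{E}\|(\mathbb{I} - P_Z)B\| \leq \left(1 + \sqrt{\tfrac{k}{p-1}}\right)\sigma_{k+1}(B) + \tfrac{e\sqrt{k+p}}{p}\left(\sum_{j>k}\sigma_j(B)^2\right)^{1/2},
\end{equation*}
which is immediate once we notice that $Z = B\Omega$ with the same Gaussian $\Omega$. Second, I would invoke the operator-inequality lemma (Proposition 8.6 in \cite{halko11}) stating that for any orthogonal projector $P$ and any matrix $A$,
\begin{equation*}
\|(\mathbb{I}-P)A\| \leq \|(\mathbb{I}-P)(AA^\dagger)^q A\|^{1/(2q+1)}.
\end{equation*}
This inequality is the technical heart of the power-iteration argument: it quantifies the fact that polynomial pre-processing amplifies the gap between dominant and discarded singular values, and is ``unwound'' by a $(2q+1)$-th root when we return to $A$.

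Third, combining the two ingredients with Jensen's inequality (the map $x \mapsto x^{1/(2q+1)}$ is concave on $[0,\infty)$) yields
\begin{equation*}
\mathbb{E}\|(\mathbb{I}-P_Z)A\| \leq \mathbb{E}\!\left[\|(\mathbb{I}-P_Z)B\|^{1/(2q+1)}\right] \leq \left(\mathbb{E}\|(\mathbb{I}-P_Z)B\|\right)^{1/(2q+1)}.
\end{equation*}
Substituting the first-step bound together with $\sigma_j(B) = \sigma_j(A)^{2q+1}$ produces precisely the second line of \eqref{eq:errexpiter}.

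Finally, to pass to the third line I would bound every $\sigma_j$ with $j>k$ by $\sigma_{k+1}$, so that $(\sum_{j>k}\sigma_j^{2(2q+1)})^{1/2} \leq \sqrt{n-k}\,\sigma_{k+1}^{2q+1}$, factor $\sigma_{k+1}^{2q+1}$ out of the bracket, and pull $\sigma_{k+1}$ outside the $(2q+1)$-th root. The main obstacle I anticipate is justifying the operator-norm inequality in the second step: the plain randomized-range-finder bound can be cited as a black box, but the polynomial-amplification inequality genuinely requires operator-theoretic work (a Courant--Fischer / min--max argument applied to $|A|$ via the polar decomposition), and some care is needed because $A$ and $\mathbb{I}-P_Z$ do not commute.
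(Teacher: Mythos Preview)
The paper does not actually prove this theorem: it is stated as Corollary~10.10 of \cite{halko11} and the surrounding text explicitly says the proofs are to be found there. So there is no ``paper's own proof'' to compare against; your proposal is a reconstruction of the argument from \cite{halko11}, and as such it is essentially correct and follows exactly the route the paper implicitly relies on (note that the paper later invokes the same two ingredients you use, equations \eqref{eq:relAB} and \eqref{eq:perr}, when proving Corollary~\ref{th:tamaErr}). One minor point: the operator inequality you attribute to Proposition~8.6 of \cite{halko11} is the one the paper labels as Theorem~9.2 there; the content is the same.
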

The application of the PI scheme reduces the average error exponentially in $q$.
In order to quantify the deviations from the expected estimation error we use the following facts:
\begin{align} \label{eq:relAB}
    || \llrr{\mathcal{I} - P_Z}A||^{2q+1} \leq || \llrr{\mathcal{I} - P_Z}B||
\end{align}
(Theorem 9.2 \cite{halko11})  and
\begin{align} \label{eq:perr}
    &|| \llrr{\mathcal{I} - P_Y}A|| \leq \\
    &1+6 \sqrt{(k+p) \cdot p \log(p)}  \sigma_{k+1} +3 \sqrt{k+p} \llrr{\sum_{j>k} \sigma_j^2}^\frac{1}{2}  , \nonumber
\end{align}
with probability greater or equal to $1- \frac{3}{p^p}$ (Corollary 10.9 \cite{halko11}).
We can now state the following
\begin{corollary} \label{th:tamaErr}
Under the same hypotheses of Theorem \ref{th:errexpiter} it is
\begin{align}
P \llrr{|| \llrr {\mathcal{I} - P_Z}A||  \leq \alpha^\frac{1}{2q+1} \sigma_{k+1}  + \beta^\frac{1}{2q+1}  \sum_{j>k}\sigma_j} \geq 1- \frac{3}{p^p},
\end{align}
with $\alpha = (1+6 \sqrt{(k+p) \cdot p \log(p)})$ and $\beta = 3 \sqrt{k+p}$.
\end{corollary}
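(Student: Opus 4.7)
The plan is to combine the two ingredients (\ref{eq:relAB}) and (\ref{eq:perr}) that the authors highlight just before the corollary, applied to the ``power-iterated'' matrix $B=(AA^{\dagger})^q A$, and then to convert the resulting bound on $\|(\mathcal{I}-P_Z)A\|^{2q+1}$ into the additive bound on $\|(\mathcal{I}-P_Z)A\|$ claimed by the statement.

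First I would apply the probabilistic bound (\ref{eq:perr}) (Corollary~10.9 of \cite{halko11}) not to $A$ but to the matrix $B$, with the random sketch $Z=B\Omega$ playing the role that $Y=A\Omega$ plays in the original statement. The conclusion is that with probability at least $1-3/p^p$,
\begin{equation*}
\|(\mathcal{I}-P_Z)B\| \;\leq\; \alpha\,\sigma_{k+1}(B) \;+\; \beta\,\Bigl(\textstyle\sum_{j>k}\sigma_j(B)^2\Bigr)^{1/2},
\end{equation*}
with $\alpha,\beta$ as in the statement. Next I would substitute the singular-value identity (\ref{eq:powSV}), $\sigma_j(B)=\sigma_j(A)^{2q+1}$, to rewrite the right-hand side in terms of the singular values of $A$, and I would invoke (\ref{eq:relAB}) to replace the left-hand side by $\|(\mathcal{I}-P_Z)A\|^{2q+1}$. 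This yields, on the same high-probability event,
\begin{equation*}
\|(\mathcal{I}-P_Z)A\|^{2q+1} \;\leq\; \alpha\,\sigma_{k+1}^{2q+1} \;+\; \beta\,\Bigl(\textstyle\sum_{j>k}\sigma_j^{2(2q+1)}\Bigr)^{1/2}.
\end{equation*}

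The final step is to take $(2q+1)$-th roots. Using the elementary subadditivity $(x+y)^{1/(2q+1)}\leq x^{1/(2q+1)}+y^{1/(2q+1)}$ for $x,y\geq0$ (which follows from the concavity of $t\mapsto t^{1/(2q+1)}$), the bound becomes
\begin{equation*}
\|(\mathcal{I}-P_Z)A\| \;\leq\; \alpha^{1/(2q+1)}\sigma_{k+1} \;+\; \beta^{1/(2q+1)}\Bigl(\textstyle\sum_{j>k}\sigma_j^{2(2q+1)}\Bigr)^{1/(2(2q+1))}.
\end{equation*}
To recover the tail-sum $\sum_{j>k}\sigma_j$ stated in the corollary, I would invoke the standard monotonicity of $\ell^r$ norms on non-negative sequences: for any $r\geq 1$, $\|x\|_r\leq\|x\|_1$. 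Applied with $r=2(2q+1)\geq 1$ to the vector $(\sigma_{k+1},\sigma_{k+2},\ldots)$, this gives $(\sum_{j>k}\sigma_j^{2(2q+1)})^{1/(2(2q+1))}\leq\sum_{j>k}\sigma_j$, yielding the claimed inequality on the same event of probability at least $1-3/p^p$.

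The main obstacle, although modest, is the passage from the $(2q+1)$-power bound to the linear bound: one has to be careful that the concavity argument really produces the clean separation $\alpha^{1/(2q+1)}\sigma_{k+1}+\beta^{1/(2q+1)}(\cdot)$ rather than a more awkward mixed expression, and that replacing the $\ell^{2(2q+1)}$ tail by the $\ell^1$ tail does not destroy the intended improvement scaling. Both steps are pure inequalities, independent of the probabilistic content, so no additional failure probability is accumulated and the bound $1-3/p^p$ inherited from (\ref{eq:perr}) carries through unchanged.
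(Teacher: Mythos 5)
Your proposal is correct and follows essentially the same route as the paper: apply Corollary~10.9 of \cite{halko11} to $B=(AA^{\dagger})^qA$, use $\sigma_j(B)=\sigma_j(A)^{2q+1}$ and the relation \eref{eq:relAB}, then take $(2q+1)$-th roots via subadditivity of $t\mapsto t^{1/(2q+1)}$. The only (immaterial) difference is in the final tail estimate: you bound $\bigl(\sum_{j>k}\sigma_j^{2(2q+1)}\bigr)^{1/(2(2q+1))}$ by the $\ell^1$ sum $\sum_{j>k}\sigma_j$ directly, whereas the paper first passes to the slightly tighter $\ell^2$ quantity $\bigl(\sum_{j>k}\sigma_j^2\bigr)^{1/2}=\|A-\widetilde{A}_k\|_F$, which also implies the stated bound.
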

\begin{proof}
By applying  \eref{eq:perr} to $B =(AA^\dagger)^q A $, and using \eref{eq:powSV} we have
\[
P \llrr{|| \llrr{\mathcal{I} - P_Z}B|| \leq\alpha \sigma_{k+1}^{2q+1} +\beta \llrr{\sum_{j>k} (\sigma_j^{2q+1})^2}^\frac{1}{2}       } \geq 1- \frac{6}{p^p}.
\]
Using the relation \eref{eq:relAB} we have that
\[
    || \llrr{\mathcal{I} - P_Z}A|| \leq \llrr{\alpha \sigma_{k+1}^{2q+1} +\beta \llrr{\sum_{j>k} (\sigma_j^{2q+1})^2}^\frac{1}{2}    }^{1/(2q+1)}
\]
since the function $f(q) = a^{1/x}, a>0, x>0$ is convex, it holds
\begin{align}
    &\llrr{\alpha \ \sigma_{k+1}^{2q+1} +\beta \llrr{\sum_{j>k} (\sigma_j^{2q+1})^2}^\frac{1}{2}    }^{1/(2q+1)}  \nonumber \\
    & \leq  \llrr{\alpha \  \sigma_{k+1}^{2q+1}}^\frac{1}{(2q+1)} + \llrr{\beta \sqrt{\sum_{j>k} (\sigma_j^{2q+1})^2}}^\frac{1}{2q+1} \nonumber \\
    &\leq  \alpha^\frac{1}{2q+1} \sigma_{k+1} +\beta^\frac{1}{2q+1}\sqrt{ \sum_{j>k} \sigma_j^2}  \nonumber \\
    &= \alpha^\frac{1}{2q+1} ||A - \widetilde{A}_k|| +\beta^\frac{1}{2q+1}||A - \widetilde{A}_k||_F.
\end{align}
\end{proof}
The results provided by the algorithm are usually closer to the average value than those estimated
by the bound \eref{eq:errexpiter}, which therefore seems to be not tight. However, the important
message of the preceding results is that by applying PI we obtain much better approximations of $A$
than those provided by the original scheme while error expectations and deviations are under full control.
\begin{algorithm}
\caption{\bf{Randomized SVD with Power Iterations}}
\label{alg:RSI}
\begin{algorithmic}[1]
\Require $m \times n$ matrix $A$; integers $l = k+p$ (rank of the approximation) and $q$ (number of iterations).

\State Draw an $ n \times l$ Gaussian matrix $\Omega$.
\State Form $Y_0 = A \Omega$.
\State Compute the QR factorization $Y_0 = Q_0 R_0$.
\For {$j=1, 2,\dots, q$}
	\State Form $\widetilde{Y}_j = A^\dagger Q_{j-1} $.
	\State Compute the QR factorization $\widetilde{Y}_j = \widetilde{Q}_j \widetilde{R}_j$.
	\State Form $Y_j = A \widetilde{Q}_{j} $.
	\State Compute the QR factorization $Y_j = Q_j R_j$.
\EndFor

\State \Return $Q = Q_q$.
\end{algorithmic}
\end{algorithm}
It would be most convenient to have some means to check how close $QQ^\dagger A$ is to the original input matrix $A$.
With such a tool, we could check the quality of the approximation; moreover, we would be able to solve the fixed-error approximation problem \eref{eq:fixerrprob}.
In the TEBD setting, this would allow us to determine the bond dimension $\chi$ for an assigned value $\epsilon$ of the truncation error.
The solution to this problem comes from this result:
\begin{theorem} \label{th:accCheck} (equation 4.3 supported by Lemma 4.1 \cite{halko11}) With $M=(I-QQ^\dagger)A$
\begin{align}
    &P\llrr{||M|| \leq 10 \sqrt{\frac{2}{\pi}} \max_{i=1,2,\ldots,r} ||M \omega^{(i)}||} \geq 1-10^{-r},
\end{align}
where $\omega^{{i}},i=1,\ldots,r$ are standard normal random vectors.
\end{theorem}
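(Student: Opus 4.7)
The plan is to reduce the statement to an independent, single-vector estimate on $\|M\omega\|$ versus $\|M\|$, and then boost it into a high-probability bound by independence of the $r$ Gaussian draws.

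First I would set up a deterministic lower bound for $\|M\omega\|$. Let $u$ be a top right singular vector and $v$ a top left singular vector of $M$, so that $\|u\|=\|v\|=1$, $Mu=\|M\|v$, and $v^\dagger M=\|M\|u^\dagger$. For any $\omega$,
\begin{equation*}
\|M\omega\| \;\geq\; |v^\dagger M\omega| \;=\; \|M\|\,|u^\dagger \omega|.
\end{equation*}
When $\omega$ is a standard normal vector, $g:=u^\dagger\omega$ is a standard Gaussian scalar (because $u$ is a unit vector and Gaussian measure is rotationally invariant). Hence $\|M\omega\|\geq \|M\|\,|g|$ pointwise in probability.

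Next I would use the elementary fact that the density of $|g|$ for a standard Gaussian is bounded above by its value at zero, namely $\sqrt{2/\pi}$, so that
\begin{equation*}
P(|g|\leq t) \;\leq\; t\sqrt{2/\pi}\qquad (t\geq 0).
\end{equation*}
Combining with the previous step, for each $i$ and any $t\geq 0$,
\begin{equation*}
P\bigl(\|M\omega^{(i)}\|\leq t\,\|M\|\bigr)\;\leq\;P(|g|\leq t)\;\leq\;t\sqrt{2/\pi}.
\end{equation*}
Choosing $t=1/(10\sqrt{2/\pi})$ makes the right-hand side equal to $1/10$.

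Finally I would use independence of $\omega^{(1)},\ldots,\omega^{(r)}$ to amplify the one-shot failure probability. The complement of the event in the theorem is
\begin{equation*}
\Bigl\{\max_{1\leq i\leq r}\|M\omega^{(i)}\|<\tfrac{1}{10}\sqrt{\tfrac{\pi}{2}}\,\|M\|\Bigr\}\;=\;\bigcap_{i=1}^{r}\Bigl\{\|M\omega^{(i)}\|<\tfrac{1}{10}\sqrt{\tfrac{\pi}{2}}\,\|M\|\Bigr\},
\end{equation*}
whose probability, by independence and the previous bound, is at most $(1/10)^r=10^{-r}$. Taking the complement yields the stated inequality.

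The steps are all short; the only conceptual subtlety is the opening lower bound $\|M\omega\|\geq\|M\||u^\dagger\omega|$, which is the content of Halko et al.'s Lemma~4.1. Everything else is a one-dimensional Gaussian anti-concentration estimate followed by independence. Note that $M=(I-QQ^\dagger)A$ is treated simply as an arbitrary deterministic matrix here; one must only take care that the $\omega^{(i)}$ used for this \emph{a posteriori} check are drawn independently of the $\Omega$ used to construct $Q$ in Algorithm~\ref{alg:RSI}, so that $M$ can indeed be considered fixed when applying the Gaussian inequality.
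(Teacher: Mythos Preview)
Your proof is correct. Note, however, that the paper does not actually give its own proof of this theorem: it is stated with a citation to Halko, Martinsson, and Tropp (``equation 4.3 supported by Lemma 4.1''), and no proof environment follows. What you have written is precisely the argument behind that cited result --- the projection onto a top singular direction to reduce to a scalar Gaussian, the anti-concentration bound $P(|g|\leq t)\leq t\sqrt{2/\pi}$, and the independence boost --- so your proposal is not so much an alternative to the paper's proof as a reconstruction of the proof the paper defers to the literature. Your closing remark that the test vectors $\omega^{(i)}$ must be drawn independently of the $\Omega$ used to build $Q$ is an important caveat that the paper leaves implicit.
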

Suppose that we have completed the first three steps of Algorithm \ref{alg:RSI}.
Set $Q = Q_0$, with $rank(Q) = l=k+p$ and choose the size $r$ of the sample.
The \emph{Accuracy Check}  algorithm (Algorithm \ref{alg:RAC}) takes in input $A,Q,r$ and $\epsilon$ and returns a new matrix $Q'$ that satisfies the accuracy bound with probability $1-10^{-r}$.
\begin{algorithm}[h]
\caption{\bf{Accuracy check}}
\label{alg:RAC}
\begin{algorithmic}[1]
\Require $m \times n$ matrix $A$; rank-$k+p$ projector $P_Q=QQ^\dagger$; integer $r$;  tolerance $\epsilon$.
\Do
\State Set $l=k+p$.
\State Draw an $ n \times r$ Gaussian matrix $\Omega_r$.
\State Compute $B=A\Omega_r$.
\State Compute $D =(I-P_Q) A \Omega_r = \llrr{d^{(1)}, d^{(2)},\ldots,d^{(r)}}$.
\State Set MAX= $\max\left \{ d^{(i)} \right \}_{i=1}^r$.
\If{$( \text{MAX} > \epsilon) $}
\State Build $\widetilde{Q} = Q|B$.
\State Set $l = l+r$.
\State Compute the QR decomposition $\widetilde{Q} = \widetilde{Q}'\widetilde{R}'$.
\State Set $Q = \widetilde{Q}'$.
\EndIf
\doWhile($\text{MAX} >\epsilon$ and $l \leq n-r$ )
\State \Return $Q$.
\end{algorithmic}
\end{algorithm}

The computational cost of the Accuracy Check depends on different parameters.
The cost of each iteration step is $\mathcal{O}(m \cdot l^2)$.
Then we have to consider the iterations.
If we take $r$ too small (e.g. $r=1$), then if the starting rank-$l$ approximation is not good enough,
we might need many iteration to converge to the desired accuracy. As a rule of thumb, we suggest to
double the rank of the approximation at each time. This will likely lead to oversampling, but still
delivers a good performance balance.

Since the reference metric in TEBD is the Frobenius norm  eq. \eref{eq:frobNorm}, some estimate of
the Frobenius norm via the operator norm is required. To this end we observe that the TEBD is successful
when the correlations in the simulated system are sufficiently short-ranged, i.e. the ``singular values decay fast enough''.
If the entanglement between distant parts of the system is non-vanishing, the decimation step will lead
to an important loss of relevant information. As an example let us consider a spin-chain of
size $n$ and a bipartition $A=\{1,2, \ldots l\}, \ B = \{l+1,\ldots, n\} $ of the chain. Let $  \ket{k}_A$
and $\ket{j}_B$ be orthonormal bases for subsystems $A$ and $B$ respectively. Then the spin-chain state
\ket{\psi} can be Schmidt-decomposed as
\[
    \ket{\psi} = \sum_{k,j} c_{k,j} \ket{k}_A \ket{j}_B = \sum_i \sigma_i \ket{i}_A \ket{i}_B,
\]
where in the last equality we used the SVD decomposition of the matrix $ C= \left ( c_{j,k}\right ) = U \Sigma V^\dagger$
to perform the Schmidt decomposition of \ket{\psi}. The Schmidt coefficients $\sigma_i$ are the singular values of
$C$, i.e. the diagonal elements of $\Sigma$ \cite{nielsen11}. The number of non-zero singular values is the Schmidt number.
The amount of entanglement between the subsystems $A$ and $B$ can be quantified by the von Neumann, or entanglement, entropy
\cite{PlenioVirmani2007} of the reduced density matrices $\rho_A$ and $\rho_B$
\begin{align}
    S(\rho_A) = -\sum_i \sigma_i^2 \log(\sigma_i^2) = S(\rho_B).
\end{align}
The decay rate of the singular value is therefore directly related to the amount of entanglement
shared between two parts of a system. If the system is highly entangled, the singular values will
decay ``slowly''; in the limiting case where the system is maximally entangled, the reduced density
matrices will describe completely mixed states and the singular values will be all equal to each other.
When the system is only slightly entangled, the singular values will decay very fast; in particular,
if the state \ket{\psi} is separable, the Schmidt number is equal to 1 (and $\sigma_1 = 1$).
The behavior of the entanglement entropy in systems at, or close to, the critical regime has been
thoroughly studied (see \cite{eisert08} and references therein) together with its dependence on the
decay rate of the eigenvalues of the reduced density matrices $\rho_{A,B}$ \cite{calabrese08, schuch08, horodecki13, horodecki15}.
By oversimplifying the problem, we observe that if the singular values decay as $\sigma_j = 1/\sqrt{j}$
the entanglement entropy shows, when the system size  $n \to \infty$, a divergence $\log^2(n)$,
whereas if they decay as $\sigma_j = 1/j$ the entanglement entropy does converge, since an area law
holds \cite{eisert08}.
When the singular values decay as $1/j$, however,  the truncation error $||A - \widetilde{A_k}||_F$ decreases slowly in $k$ and  the TEBD scheme becomes inefficient since any decimation will lead to considerable approximation errors (see Eq.~\eref{eq:truncerr}).
For these reasons we consider the case of linearly decreasing singular values as an extremal case for the range of applicability of the TEBD scheme.

This observation provides a useful tool to estimate the Frobenius norm, which plays a central role in TEBD,  through the operator norm.
For any matrix $A$ it is:
\begin{align}
||A|| \leq ||A||_F \leq \sqrt{rank(A)} ||A||. \nonumber
\end{align}
This result holds in general.
The inequalities are saturated, in particular,  when $rank(A)=1$.
The upper bound on the Frobenius norm is the tighter the closer the singular values of $A$ are  to each other.
If the singular values decay at least as $1/j$, on the other hand, we have the following result.
\begin{fact}
Given a rank-$n$ matrix $A$ with singular values $\sigma_1 \geq \sigma_2 \geq \ldots \geq \sigma_n$ with  $\sigma_j \leq  \sigma_1/j, j=1,\ldots,n$, it holds:
\begin{align}
    ||A||_F \leq \frac{\pi}{\sqrt{6}}  ||A|| = \sigma_1 \frac{\pi}{\sqrt{6}}
\end{align}
where $||A||$ indicates the operator norm.
\end{fact}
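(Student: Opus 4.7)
The plan is very direct. I would start from the definition of the Frobenius norm in terms of singular values, namely $\|A\|_F^2 = \sum_{j=1}^n \sigma_j^2$, and then apply the hypothesis $\sigma_j \leq \sigma_1/j$ termwise to bound this sum.

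After termwise bounding, I obtain
\begin{equation*}
    \|A\|_F^2 \leq \sum_{j=1}^n \frac{\sigma_1^2}{j^2} \leq \sigma_1^2 \sum_{j=1}^\infty \frac{1}{j^2} = \sigma_1^2 \cdot \frac{\pi^2}{6},
\end{equation*}
where in the final equality I invoke the Basel identity. Taking square roots and recalling that $\|A\| = \sigma_1$ by definition of the operator norm for a matrix with ordered singular values yields exactly $\|A\|_F \leq \frac{\pi}{\sqrt{6}}\|A\| = \sigma_1 \frac{\pi}{\sqrt{6}}$, which is the claimed inequality.

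There is no real obstacle here. The only subtlety worth flagging is that the bound on $\sum 1/j^2$ is extended from a finite sum over $j=1,\ldots,n$ to the whole series; this is legitimate since all terms are positive, and it is in fact what makes the constant $\pi/\sqrt{6}$ independent of $n$, which is precisely the feature that will be useful in the TEBD context (the Frobenius norm stays controlled by the operator norm uniformly in the size of the decomposed matrix). I would present the proof in essentially this one-display form.
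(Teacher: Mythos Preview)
Your proof is correct and follows essentially the same route as the paper: write $\|A\|_F$ in terms of the singular values, bound each $\sigma_j$ by $\sigma_1/j$, and then invoke the Basel sum $\sum_{j\geq 1} 1/j^2 = \pi^2/6$ to obtain the $n$-independent constant. The only cosmetic difference is that the paper keeps the square root throughout rather than squaring first, but the argument is identical.
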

\begin{proof}
\begin{align}
    ||A||_F = \sqrt{Tr\llrr{A^\dagger A}} = \sqrt{\sum_{i=1}^n \sigma_i^2} \le \sigma_1 \sqrt{\sum_{i=1}^n \frac{1}{i^2}}.
\end{align}
The last term is upper bounded by
\[
\sigma_1 \lim_{n \to \infty} \sqrt{\sum_{i=1}^n \frac{1}{i^2}} = \sigma_1 \frac{\pi}{\sqrt{6}}.
\]
\end{proof}
This result finds application in the Accuracy Check routine. If we set $P_Q= QQ^\dagger$ we find
\begin{align}
    ||(I -P_Q)A||_F  &= \sqrt{Tr\llrr{\llrr{A-P_Q A}^\dagger \llrr{A-P_Q A}}} \nonumber \\
    & =  \sqrt{Tr \llrr{A^\dagger A} - \llrr{A^\dagger P_Q A}} \nonumber \\
    &= \sqrt{||A||_F^2 - ||A^\dagger Q||_F^2}\nonumber \\ 
    & \leq ||A||_F.
\end{align}
Therefore
\begin{align}
    ||(I-P_Q)A||_F &\leq \frac{\pi}{\sqrt{6}} ||A||_F  \\
    & \leq \frac{10}{3} \pi   \max_{i=1,2,\ldots,r}||(I-QQ^\dagger)A \omega^{(i)}|| \nonumber
\end{align}
with probability $1-10^{-r}$.

We point out that it is  not really necessary to \emph{estimate} the Frobenius norm of the error: given $Q$ we can compute $ ||(I-QQ^\dagger)A||_F$ directly.
However one should note that this computation of $QQ^\dagger A$ requires $2 (m \cdot n \cdot k)$ float operations instead of the $ 2 (m \cdot k \cdot r)+ m \cdot n \cdot r$ operations required to get the error estimate.

Now that we have the orthogonal matrix $Q$ whose columns constitute a basis for the approximated range of the input matrix $A$, we can directly compute the approximate SVD of $A$ by:
\begin{enumerate}[i)]
\item Form the matrix $B = Q^\dagger A$.
\item Compute the SVD of $B$: $B=\widetilde{U}\Sigma V^\dagger$.
\item Form the orthonormal matrix $U = Q\widetilde{U} $.
\end{enumerate}
The product $Q^\dagger A$ requires $\mathcal{O}((k+p) \cdot n \cdot m)$ floating point operations.
The SVD of $B$, under the reasonable assumption that $k+p\leq n$,  requires $\mathcal{O}(n \cdot (k+p)^2)$ operations.
The product $Q\widetilde{U}$ requires $\mathcal{O}(m \cdot (k+p)^2)$ operations.

We conclude this section by presenting some results on the computational cost of the whole 
procedure and on some more technical aspects related to the implementation of the algorithm.
The asymptotic computational cost of partial steps has been given at various places in this section.
In summary, the real bottleneck of the complete RRSVD algorithm is the first matrix multiplication $Y=A \Omega$ which has complexity $\mathcal{O}(m \cdot n \cdot(k+p))$.
The value of $p$ can be set in advance or determined by the Accuracy Check method described above.
All remaining operations, such as QR decompositions and error estimation, have smaller computational complexity.
If the TEBD scheme is applied in a non-adaptive way, i.e. the bond dimension is kept fixed at a given value $\chi$,  we use RRSVD to solve the fixed-rank problem.
In this case RRSVD has complexity $\mathcal{O}(m\cdot n  \cdot \chi)$.
If  the bond dimension $\chi$ is set independently of the input matrix size, the replacement of the standard SVD by RRSVD will  therefore result in a speed-up linear in $n$.
On the other hand, if we use an adaptive TEBD simulation where the bond dimension is set such that some bound on the approximation error is satisfied, the cost of RRSVD will (strongly) depend on the structural properties of the input matrices.
If the singular values decay exponentially (short-range correlations), then the expected speed-up is roughly the same as for the non-adaptive scheme.
If the simulated system presents long-range correlations, then the speed-up provided RRSVD will be less then linear in $n$, possibly even vanishing.
However, TEBD itself is not the ideal tool to deal with systems exhibiting long-range correlations, so this is only a minor limitation.

Another crucial observation, related to the implementation, is that all the operations required by RRSVD are standard functions of either the Basic Linear Algebra Subprograms (BLAS) \cite{blas} or the Linear Algebra PACKage (LAPack  \cite{lapack}) libraries.
Both libraries are heavily optimized and are available for single-core, multi-core (e.g. Intel Math Kernel Library (MKL) \cite{mkl}) and Kilo-processor architectures (e.g. CuBLAS \cite{cublas} and CULA \cite{cula}).
Since TEBD simulations are usually run on many cores (on current cluster architectures often 8 or 16), RRSVD can take full advantage of the optimized libraries.
\section{RRSVD case-study and profiling of real TEBD simulations}
\label{sec:rrsvdtedopa}
We start by showing that, despite its random nature, RRSVD produces very accurate  results with surprisingly small fluctuations.
To this end we test the described algorithm on a  sample of relatively small structured matrices extracted from pure-state TEBD simulations of our standard system from section~\ref{sec:tedopa}, subsequently continuing to larger matrices from mixed-state TEBD simulations of the same system.
We analyze the accuracy of RRSVD and its time performances, concluding the section by  presenting  how RRSVD impacts full TEBD simulations. 

\subsection{Stability analysis} \label{sec:stability}
In order to benchmark the stability of the RRSVD algorithm we consider a set of $7$ diagonal $750 \times 750$ matrices $\Sigma_m,\ m=1,2,\ldots,7$.
The matrices $\Sigma_m$ are extracted from an actual pure-state TEBD simulation of our benchmark system described in section \ref{sec:tedopa}.
For every $\Sigma_m$ we generate a set of $n_A=20$ random matrices $\left \{ A_{m,i}\right \}_{i=1}^{n_A}$ by randomly generating $1500 \times 750$ random orthonormal matrices $U_{m,i}$ and $750 \times 750$ random orthonormal matrices $V_{m,k}$; each $A_{m,i}$ has dimensions $1500 \times 750$.
In this way we check for the dependence of both LaPack SVD and RRSVD on different matrices with the same structural properties.
In order to take the random nature of RRSVD into account, for \emph{each}  $A_{m,i}$ we perform $n_R=20$ executions of RRSVD.
In this first benchmark we provide a rank-$50$ approximation of the rank-$750$ original matrices $A_{m,i}$ and show how the 
accuracy is related to the number of subspace iterations $q$. Motivated by the theorems reported in the previous section, we 
set $p=50$. We compare the accuracy and timing results for different values of the iteration parameter $q=2,4,6$. The \emph{accuracy-check} 
part of the algorithm is not included here: we do not estimate the difference between the original matrix $A$ and its projection on the reduced space.

By running the RRSVD on a set  $\left \{A_{k,i} \right \}_{i=1}^{n_A}$ random realizations of matrices exhibiting the same structure, i.e. same singular values $\Sigma_k$, we check that the accuracy of RRSVD depends only, for fixed numbers of iterations $q$, approximation-rank $k$ and oversampling parameter $p$,  on the structural properties of the matrices.
Therefore, in what follows we present an analysis referring to the instance corresponding to the random realization $A_{2,1}$, which is, in every respect, a good representative of the kind of matrices we deal with when performing a TEBD simulations of a pure quantum system far from the critical regime.
In  \fref{fig:figure1a} we plot the singular values of $A_{2,1}$.
It is important to notice that some of the largest singular values are very close to each other.
This is a typical situation in which truncated-SVD methods belonging to the family of Krylov-subspace iterative methods are likely to require more iterations in order to accurately resolve the singular values.
RRSVD, on the other hand, is completely insensitive to this peculiarity of the spectrum and provides very good results for the whole range of retained singular values ($k=50$) starting from $q=4$: the approximation error is comparable to that of the state-of-the-art LAPack SVD routine.
Most noticeably, none of the $n_R$ executions of RRSVD on the instance $A_{2,1}$ presents \emph{outliers}, that is to say singular values that are computed with anomalous inaccuracy (\fref{fig:figure1b}).
The behavior of the approximation error as a function of $q$ is compatible with the theoretical results stated in the previous section.
In Table \ref{tab:tabSpeedUp1} we show the speed-up $t_{SVD}/t_{RRSVD}$ when both, MKL-SVD and MKL-based RRSVD (see section \ref{sec:profiling} for more information about the implementation), are executed on an Intel Xeon X5570@2.93GHz by 8 concurrent threads.
The speed-up over the MKL-SVD is obviously decreasing as $q$  and $k$ increase: for $k=p=100$ and $q=6$ almost no advantage remains in applying RRSVD instead of the standard SVD.
\begin{table}[t]
\begin{center}
\begin{tabular}{||c|| | c | c | c | c | c | c ||}
\hline \hline
 k/q & 0 & 2 & 4 & 6  & 8 & 10\\ \hline \hline
 50 &  11.6 & 5.4 & 3.5 & 2.6 & 2.04 & 1.7 \\ \hline
 100 &  4.7 & 2.3 & 1.5 &1.1 & 0.89 & 0.69 \\ \hline \hline
\end{tabular}
\end{center}
\caption{$A_{2,1}$: RRSVD Speed-up $t_{SVD}/t_{RRSVD}$.  LAPack SVD time: 3.84 s}
\label{tab:tabSpeedUp1}
\end{table}%
It is worth stressing here that RRSVD is meant to deliver a substantial advantage only for very large matrices and a comparatively small number of retained dimensions, as we will show later.
\begin{figure}[h]
\subfigure[]{\label{fig:figure1a} \includegraphics[width=\columnwidth]{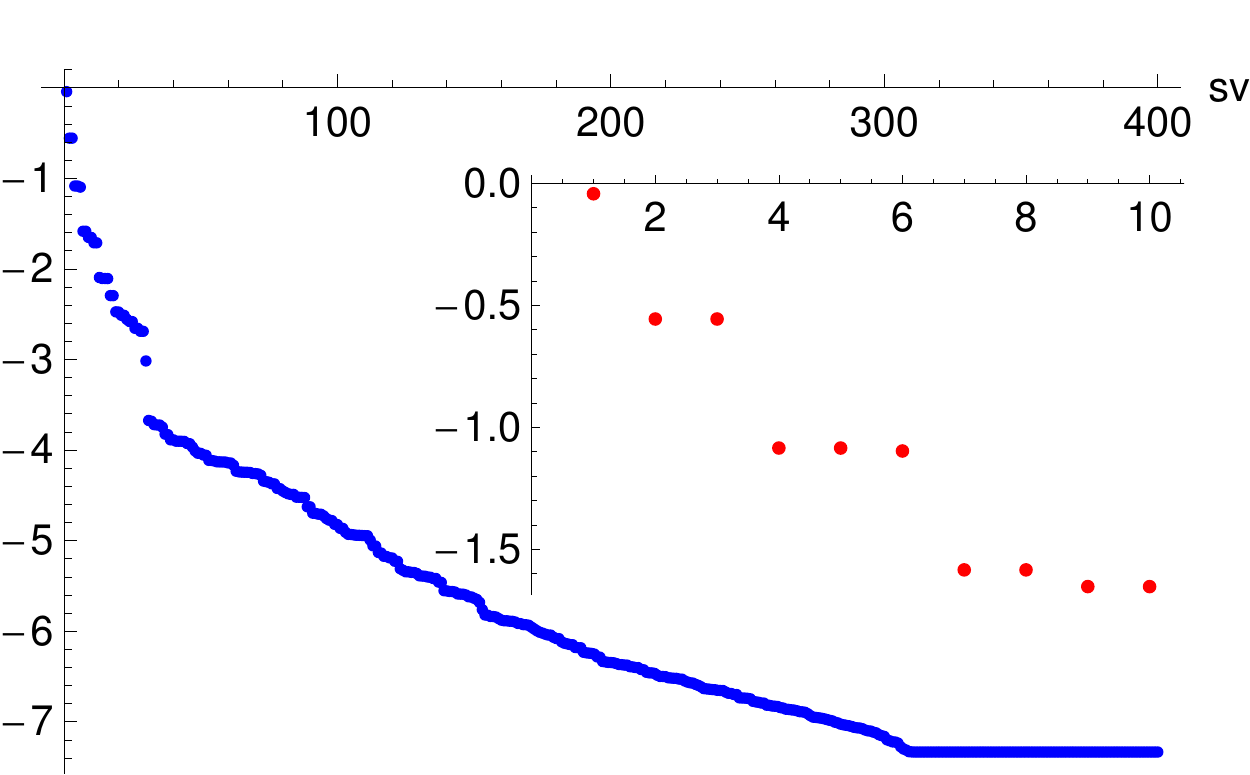}}
\subfigure[]{\label{fig:figure1b} \includegraphics[width=\columnwidth]{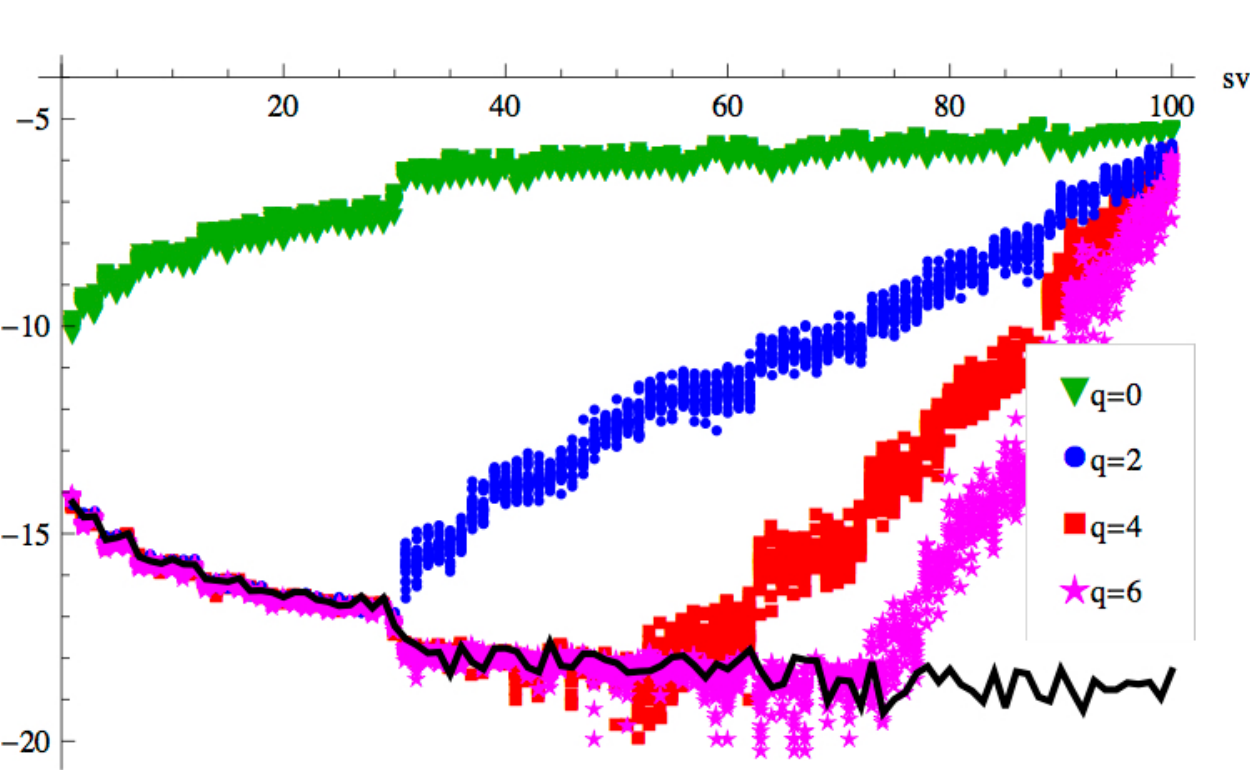}}
\caption{Instance $A_{2,1}$: $m=1500, n=750$; $k=p=50$.
Discarded weight: $w=4 \cdot 10^{-4}$ (a)Base-10 logarithmic plot of  the singular values $\Sigma_2$: the decay appears roughly exponential.
The inset shows the first $10$ singular values: some of these SVs are very close to each other (about $10^{-5}$) (b)  The errors $\log_{10}(|\sigma_i - \sigma_i^{RRSVD}|)$ of the RRSVD  for each singular value and for all  $n_R=20$ executions of RRSVD on the same instance matrix $A_{2,1}$ for different values of the iteration number $q$.
The standard MKL SVD routine errors are shown as a thick black line. }
\label{fig:figure1}
\end{figure}
\begin{figure}[h]
\subfigure[]{\label{fig:figure2a} \includegraphics[width=\columnwidth]{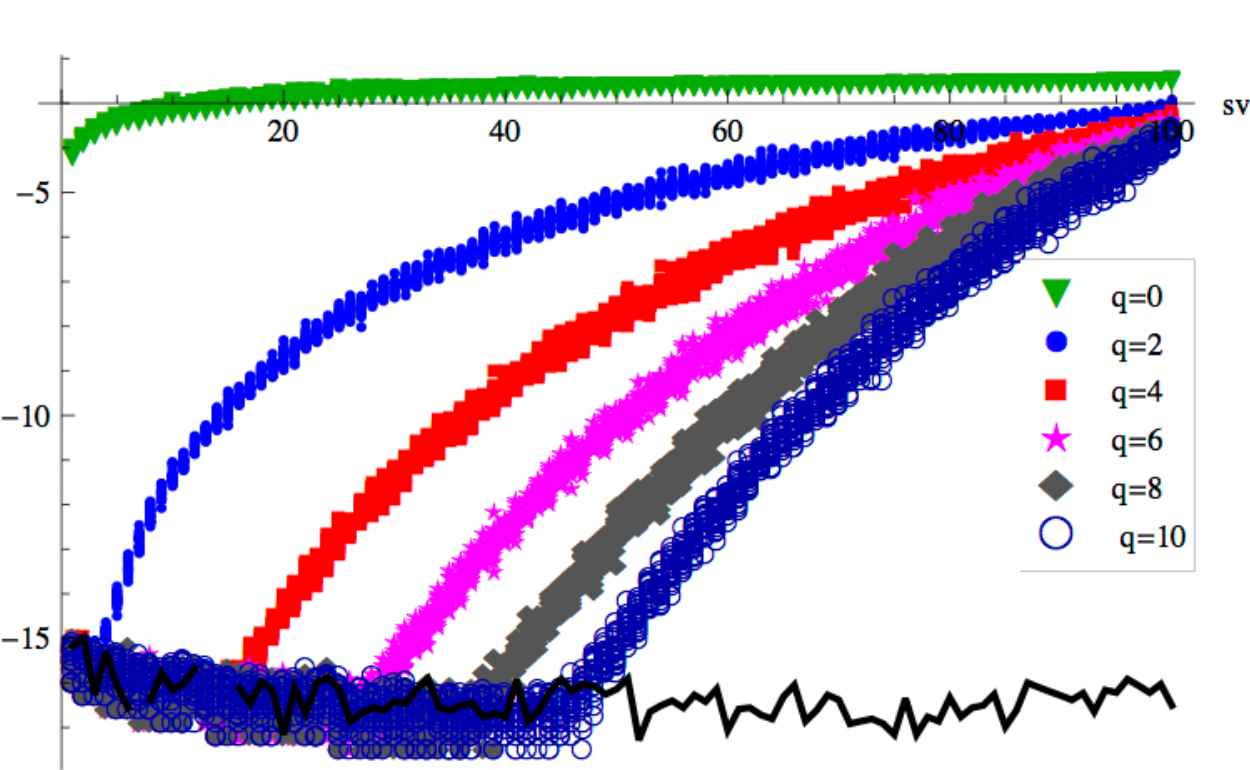}}
\caption{Instance $A_c$: $m=1500,n=750$; $k=p=50$. The errors $\log_{10}(|\sigma_i - \sigma_i^{RRSVD}|)$ 
of the RRSVD  as in \fref{fig:figure1b} but referring to the singular values of the matrix $A_c$ and for 
$q \in \{0,2,4,6,8,10\}$. The discarded weight $w$ for the chosen value of $k$ is $w=1 \cdot 10^{-1}$. }
\label{fig:figure1}
\end{figure}
%
We now  turn our attention to the TEBD extremal case discussed previously.
We consider an $m=1500, n=750$ random matrix $A_c$ generated, as described at the beginning of this subsection, starting from singular values $\Sigma_c =diag(\sigma_1^c,\sigma_2^c,\ldots,\sigma_n^c) $ with $\sigma_i^c  =1/i$.
As shown in \fref{fig:figure2a}, in order to provide the same accuracy delivered by LAPack on the first $k$ singular values, we need to increase the number of iterations.
For $k=50$ and $q=10$ RRSVD is still able to provide some speed-up over the LAPack SVD.
But the real problem is the truncation error: for $k=50$ we have a truncation error of $\epsilon\approx10^{-1}$. But in order to achieve $\epsilon<10^{-2}$, about $650$ singular values need to be retained. This results in a major loss of efficiency of the TEBD simulation scheme.

Therefore we can claim that RRSVD is indeed a fast and reliable method, able to successfully replace the 
standard SVD in  the TEBD algorithm \emph{in all situations} where TEBD can successfully be applied.
\subsection{Performance on larger matrices and TEBD profiling} \label{sec:profiling}
Now that the basic properties of the algorithm are established, we test it on larger 
matrices sampled from mixed-state TEBD simulation of our benchmark system from section \ref{sec:tedopa}.
Given the bond dimension $\chi$ and the local dimension $d$ of the sites involved in 
the two-site-update, the size of the matrix given as input to the SVD is  $d^2 \chi \times d^2 \chi$.
In the following example we set $k=\chi=100$  and the dimension of the local oscillators 
to $3,4,5,6$ and $7$ respectively. We therefore present results for matrices of dimensions 
$d_3=900 \times 900, d_4= 1600 \times1600, d_5 =2500 \times 2500$, $d_6 = 3600 \times 3600$ 
and $d_7  = 4900 \times 4900$. The structural properties of the test matrices considered 
are similar to the non-critical instances considered in the previous subsection.

We first analyzed the results provided by the RRSVD routine on a large sample of matrices (200 instances for each dimension).
We determined that the RRSVD reaches the LAPack accuracy for a number $q=2$ of PI steps.

We have developed three versions of the RRSVD algorithm: BL-, MKL- and GPU-RRSVD; each one 
uses a different implementation of the BLAS and LAPack libraries. BL-RRSVD  is based on a 
standard single-thread implementation  (CBLAS \cite{cblas}, LAPACKE \cite{lapacke}); MKL-RRSVD 
uses the Intel$^ \circledR$  implementation  Math Kernel Library (MKL \cite{mkl}); GPU-RRSVD  
exploits CUBLAS\cite{cublas} and CULA \cite{cula}, i.e. the BLAS and LAPack libraries for 
Nvidia$^\circledR$ Graphics Processing Units (GPUs). RRSVD is available for single/double 
precision real/complex matrices in each version. We refer the reader to  \cite{RRSVDgit} 
for more details about our  RRSVD implementations.
During the completion of this work,  another implementation of RRSVD from one of the 
authors of \cite{halko11} has been reported in \cite{martinsson15}. There the authors 
present  three variants  of RRSVD (essentially RRSVD with and without the PI and the 
Accuracy check) and discuss their performance on large (up to $6000 \times 12000$) real 
matrices with slowly decaying singular values. The implementation described in \cite{martinsson15} 
is available for single-multi and Kilo processor architectures, as ours, but is limited 
to double precision real matrices.  We are currently working on a  full comparison between 
our and this other version of RRSVD.

In Table \ref{tab:baretimes} we show the time required to perform the SVD/RRSVD of $
d_3,d_4,d_5$,$d_6$ and $d_7$ double-precision complex matrices for the three implementations.

\begin{table*}[t]
\begin{center}
\begin{tabular}{||c|| | c |c | c | c | c | c | c | c ||}
\hline \hline
 & 1-BL-SVD& 1-BL-RRSVD & 1-MKL-SVD & 1-MKL-RRSVD & 16-MKL-SVD &16-MKL-RRSVD  & GPU-SVD & GPU-RRSVD\\ \hline \hline
 $d_3$ &  13.27 & 6.62 & 1.47 &0.71 & 0.46 & 0.14 & 1.08 & 0.25 \\ \hline
 $d_4$ &  262.47 &  21.38 & 9.31 & 1.69 & 1.92 & 0.36 & 3.92 & 0.41 \\ \hline
 $d_5$ &  449.97 & 37.19 & 31.87 & 3.62 & 6.07 & 0.54 & 9.95 & 0.61 \\ \hline
$d_6$ &  1464.67 & 74.48 & 97.37 &6.97& 22.93 & 0.84  & 21.97 &0.88 \\ \hline
$d_7$ &  1973.23& 99.9 & 241.01 & 11.49 & 61.51 & 1.43  & 49.00 &1.48 \\ \hline\hline
\end{tabular}
\end{center}
\caption{Execution time, in seconds, for the SVD of matrices of different sizes with LAPack SVD and RRSVD.
The parameters for RRSVD are $q=2$, $k=p=100$.
1-BL: single-thread  CBLAS-LAPACKE-based implementation, executed on a Intel i7@2.66GHz processor.
MKL: MKL-based implementation; 1-MKL: with one MKL thread,  16-MKL: with 16 MKL-threads, executed on one and  two  8-core Xeon X5570@2.93GHz respectively.
GPU: CUBLAS-CULA implementation executed on a NVIDIA K20s; the timing in this case includes host-to-device and device-to-host memory transfers.}
\label{tab:baretimes}
\end{table*}%
RRSVD provides a speed-up (\fref{fig:speedup}) which is consistent with the predictions except in the 16-MKL case, there
it grows stronger than linearly for matrix sizes in the range $d_3-d_6$. This peculiar behavior is due to the low impact
of the initial matrix multiplication $Y = A \Omega$ (and subsequent ones) on matrices of such sizes: this operation is
heavily optimized for multi-threaded execution. Then the operations that determine the computational cost are the QR and
final SVD decompositions,  which have complexity $\mathcal{O}\llrr{m \cdot(k+p)^2}$. Since $k+p$ is kept constant, we have
a quadratic speed-up. This justification is supported, for example, by the speed-up scaling of the 1-MKL case.
However, as the matrix size increases the speed-up will tend to be linear (see the $d_7$ case).
The performance  on RRSVD provided by GPUs are comparable those delivered by the 16-MKL .
Indeed, the standard SVD is faster on the GPU starting from size $d_6$.
This behavior was expected, since our test matrices are still too small to take full advantage of the Kilo-processor architecture.
%

At last, we  perform full TEBD simulations:  for each dimension $d_i, \ i=3,4,5,6,7$ we executed one 
TEBD simulation with  the standard SVD routine, and another TEBD with RRSVD. We  ran all the jobs on 
the same  cluster node, equipped with two  Xeon X5570@2.93GHz with 8 cores each, as to assure a fair 
comparison.
In \fref{fig:profile} we show the average time required by the TEBD two-site update when standard 
SVD and RRSVD are used. The overall speed-up of the two-site update (inset of \fref{fig:profile}) 
agrees with the Amdahl law \cite{amdahl67}. The average is taken over all maximum-sized two-site 
updates performed in the TEBD simulation (more than 1500 for every dimension). When the standard 
SVD is applied, it takes more than 90\% of the overall computation time; however if RRSVD is employed, 
SVD-times reduce drastically. Its computational cost is of the same order as that of the building 
of the $\widetilde{\Theta}$ matrix (cf. \fref{fig:profile}). The fluctuations around the average RRSVD
time are due to the action of the Accuracy-Check: from time to time the bond dimension $\chi$ must 
be increased in order to keep the approximation error below the threshold value $\epsilon = 10^{-3}$ 
required in the simulation. According to  Corollary \ref{th:tamaErr}, for the choice $p=100$ and $ q=2$, 
such an increase is motivated only by the decay rate of the singular values: the failure probability 
is smaller than $1/10^{100}$. This is confirmed by an  \emph{a posteriori} analysis of the matrices 
produced during our test TEBD simulations that required an extension of the bond dimension. The 
rank of the approximation proposed by RRSVD, for  assigned tolerance $\epsilon$, can therefore be 
used to detect either an entanglement build-up in the system or  an (unlikely) anomalous behavior of the RRSVD itself.
%

%
%
%
\begin{figure}[t]
\includegraphics[width=\columnwidth]{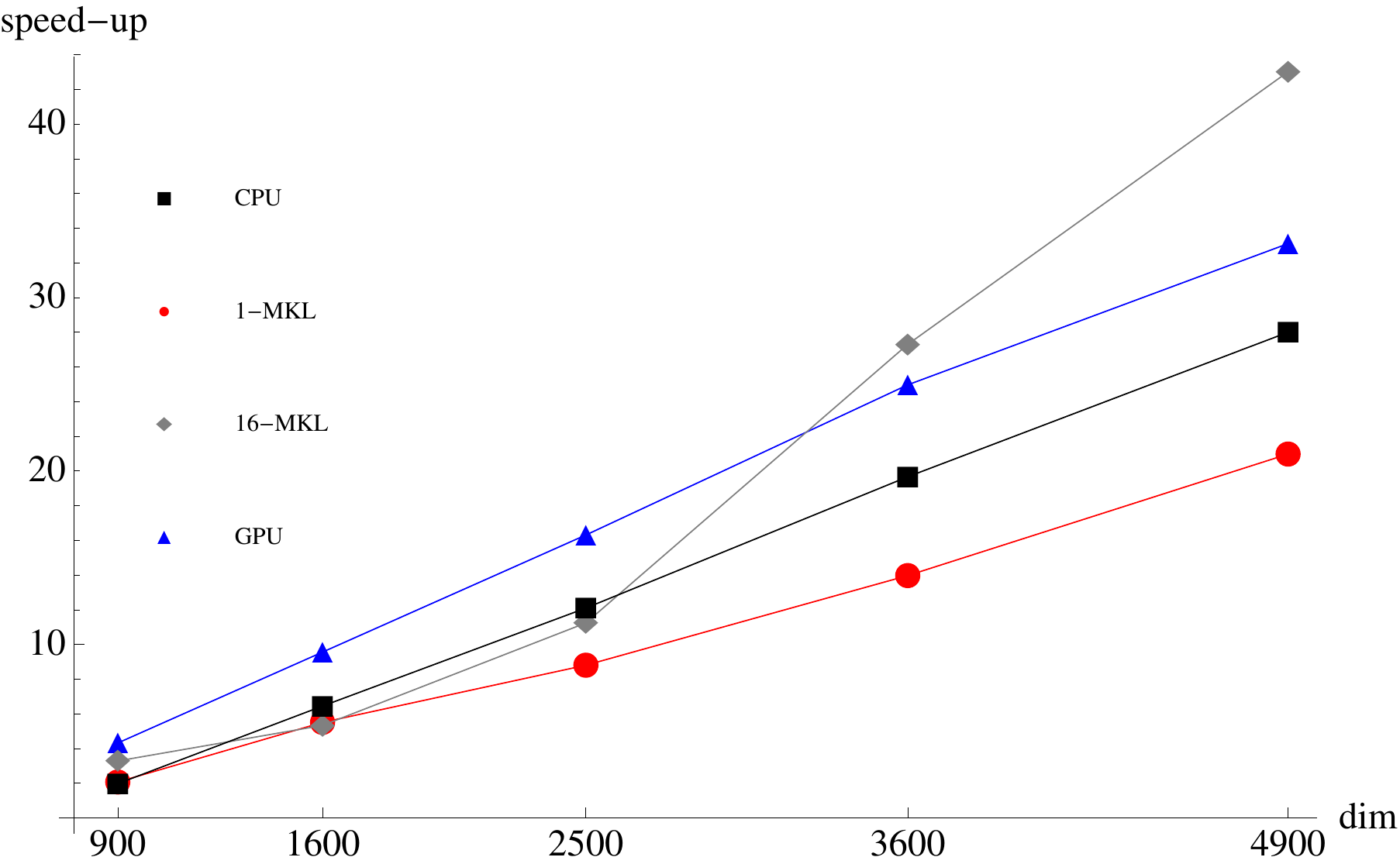}
\caption{$k=p=100$; $q=2$. Scaling of the SVD/RRSVD speed-up for different platforms as a function of the matrix size, as obtained from the data reported in Table \ref{tab:baretimes}.}
\label{fig:speedup}
\end{figure}
\begin{figure}[t]
\includegraphics[width=\columnwidth]{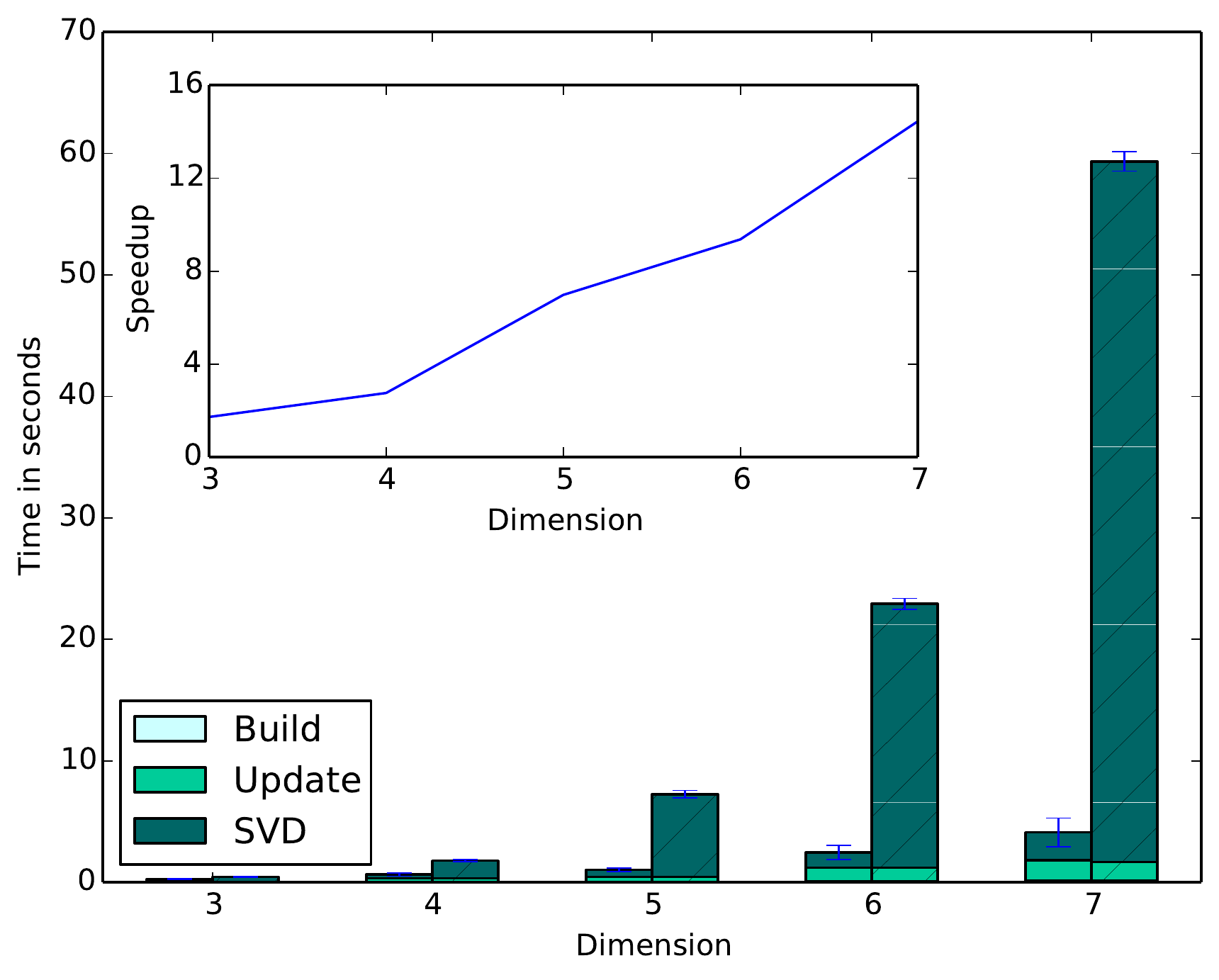}
\caption{\label{fig:profile} Profile of the average time required by the SVD during a TEBD update step; left bars are for the RRSVD,
right bars for the standard SVD. The data refers to the MKL implementation executed with $16$ MKL-threads. The (unbiased) standard deviation
of the execution time for the singular value decomposition part of the update step is shown as an error bar on top of each bar.
The inset presents the overall speed-up for one complete TEBD two-site update achieved by the RRSVD.}
\end{figure}

\section{Conclusions and Outlook}
The TEBD-algorithm, an important tool for the numerical description of one-dimensional quantum many-body
systems, depends essentially on the SVD which, in current implementations represents the bottleneck. We
have demonstrated that this bottleneck can be addressed by the successful application of the RRSVD algorithm
to TEBD simulations. The block decimation step of the TEBD update procedure is now approximately one order
of magnitude faster than with the standard SVD without incurring additional accuracy losses. We note that
in our test cases we have always chosen the RRSVD parameters such that we obtain singular values (and vectors)
which were as precise as those provided by the standard (LAPack) SVD routine. By relaxing this requirement,
the speed-up can be increased further. Moreover, by augmenting RRSVD with the Accuracy Check feature we are
able not only to detect any very unlikely deviations from the average behavior, but also to understand whether
the system is experiencing an entanglement build-up which would require an increasing of the retained Hilbert space.

In this paper we focused on the TEBD algorithm and its application to the one-dimensional system obtained
through a TEDOPA mapping of an open quantum system (section \ref{sec:tedopa}). In this context, RRSVD makes
it possible to increase the dimension of the local oscillators with a much reduced impact on the computational
time, thus allowing for the efficient simulation of the system at higher temperatures. However, all MPS algorithms
relying on the SVD to select the relevant Hilbert space can greatly benefit from the use of the RRSVD routine,
as long as the ratio between the number of retained and total singular values is sufficiently small.

The real scope and impact of this new computational tool is still to be understood fully. To this end, we
prepared the \emph{RRSVD-Package}, a library that provides the RRSVD routine for single-/double-precision
and real-/complex-matrices. The package includes all the different implementations (BL, MKL, GPU) of RRSVD
and has been designed to be plugged into existing codes through very minor code modifications: in principle,
it suffices to replace each call to SVD by a call to RRSVD. This should allow for a very quick test of RRSVD
in different simulation codes and scenarios. The library is written in C++; a Fortran wrapper, that allows
to call of the RRSVD routine from Fortran code, is included as well. Some of the matrices used for the analysis
of RRSVD in this paper are available, together with some stand-alone code that exemplifies the use of RRSVD
and how to reproduce some of the results reported in this paper. The RRSVD-Package is freely available at
\cite{RRSVDgit} .
%
%
\\[5pt]
The results obtained for the GPU implementation are rather promising: for the largest matrices considered
($d_5,d_6,d_7$) the GPU performs as well as the cluster node. Preliminary results on even larger matrices
show that a GPU can become a valid alternative means to perform TEBD simulations of system with high local
dimensions, or when the number of retained dimensions must be increased because of larger correlation lengths.
Moreover, if operated in the right way, a GPU can act as a large cluster of machines \cite{tama14} without
the difficulties stemming from the need of distributing the workload among different computational nodes
(Message Passing Interface (MPI)). A full GPU version of TEBD can make the access to super-computing facilities
superfluous: a typical laboratory workstation equipped with one or two GPUs would be sufficient. We are
currently re-analyzing the TEBD algorithm to expose further options for parallelization, as for example in
the construction of the $\widetilde{\Theta}$ matrix. It could be computed by an ad-hoc designed CUDA-kernel
and is a valid target for improvement now that its computational complexity is similar to that of the SVD.
\section*{Acknowledgements}
This work was supported by an Alexander von Humboldt-Professorship, the EU Integrating project SIQS, the
EU STREP projects PAPETS and EQUAM, and the ERC Synergy grant BioQ. The simulations were performed on the
computational resource bwUniCluster funded by the Ministry of Science, Research and Arts and the Universities
of the State of Baden-W\"urttemberg, Germany, within the framework program bwHPC.

\end{document}